\newtheorem{lemma}{Lemma}
\newtheorem{remark}{Remark}
\begin{document}
%
\title{{Age-of-Information Dependent Random Access for Massive IoT Networks}
\thanks{The first two authors contributed equally to this work. The work of H. Chen is supported by the CUHK direct grant under the project code 4055126.}
}
\author{\IEEEauthorblockN{He Chen\textsuperscript{1}, Yifan Gu\textsuperscript{2}, and Soung-Chang Liew\textsuperscript{1}}%
\IEEEauthorblockA{\textsuperscript{1}Department of Information Engineering, The Chinese University of Hong Kong, Hong Kong SAR, China}
\IEEEauthorblockA{\textsuperscript{2}School of Electrical and Information Engineering, The University of Sydney, Sydney, Australia\\
\textsuperscript{1}\{he.chen, soung\}@ie.cuhk.edu.hk, \textsuperscript{2}yifan.gu@sydney.edu.au}}

\maketitle

\begin{abstract}
As the most well-known application of the Internet of Things (IoT), remote monitoring is now pervasive. In these monitoring applications, information usually has a higher value when it is fresher. A new metric, termed the age of information (AoI), has recently been proposed to quantify the information freshness in various IoT applications. This paper concentrates on the design and analysis of age-oriented random access for massive IoT networks. Specifically, we devise a new stationary threshold-based age-dependent random access (ADRA) protocol, in which each IoT device accesses the channel with a certain probability only when its instantaneous AoI exceeds a predetermined threshold. We manage to evaluate the average AoI of the proposed ADRA protocol mathematically by decoupling the tangled AoI evolution of multiple IoT devices and modelling the decoupled AoI evolution of each device as a Discrete-Time Markov Chain. Simulation results validate our theoretical analysis and affirm the superior age performance of the proposed ADRA protocol over the state-of-the-art age-oriented random access schemes.
\end{abstract}



\IEEEpeerreviewmaketitle

\section{Introduction}
Internet of Things (IoT) represents one of the most significant paradigm shifts recently, which can revolutionize the information technology and several aspects of everyday life such as living, e-health and driving; it envisions to transform every physical object into an intelligent individual that is capable of sensing, communicating and computing \cite{Lin2017IoTJ}. Ericsson foresaw that by 2021, there will be around 28 billion IoT devices and a large share of them will be empowered by wireless communication technologies \cite{Ericsson2016}. Analysts predicted that by 2025, the economic impact of the IoT could reach US\$11 trillion, or 11\% of global economic value, and by 2030 the IoT could influence nearly the entire economy \cite{McKinsey2015}.

A typical IoT network is made up of three main ingredients: 1) IoT devices, 2) communication network, and 3) information fusion nodes. The IoT devices are often deployed to observe a physical characteristic of the environment, e.g., temperature, pollution levels, or speed and location of a vehicle. The sensed data are transmitted through the communication network to the information fusion nodes where they are processed to extract meaningful information, e.g., control decisions or remote source reconstruction for predicting its information status evolution. Clearly, the accuracy of such output decisions, which affects the performance of various IoT-enabled applications, is heavily determined by the freshness of the data measurements of IoT devices at the information fusion nodes~\cite{Shreedhar2018Acp}.

Conventional performance metrics (e.g., throughput and delay) cannot adequately capture the information freshness. Specifically, due to random network delay, maximizing throughput or minimizing delay does not necessarily guarantee the freshest information to be observed at the receivers \cite{Sun2017tit}. In this context, the AoI concept was first introduced in \cite{Kaul2011mini} as a new metric to measure the information freshness at the destination side. AoI is a function of both how often packets are transmitted and how much delay packets experience in the system. The metric of AoI is of great importance in the IoT applications where the timeliness of information is crucial, and thus has attracted enormous attention recently, see e.g., \cite{kadota2018scheduling,wang2018skip,gu2019minimizing,maatouk2019minimizing,gu2019timely,wang2019minimizing,wang2020minimizing,wang2019minimizing2,li2020ageoriented} and references therein.

With the new metric of AoI, a fundamental design problem for large-scale wireless IoT networks is ``\emph{how to schedule the status updates of massive IoT devices to achieve a low network-wide AoI}". Though the analysis and optimization of AoI for various network setups have become an increasingly hot topic recently, there has only been limited work that attempted to answer the fundamental question given above  \cite{S.Kaul-Distributed-Centralized-MAC,R.Talak-Distributed-MAC,Kosta2019age,jiang2018timely,jiang2018decentralized,chen2019age}. Specifically, \cite{S.Kaul-Distributed-Centralized-MAC,R.Talak-Distributed-MAC,Kosta2019age} investigated age-independent stationary randomized policies, in which each transmitter sends its packet with some fixed probability that can be optimized ahead of time. These stationary randomized policies are easy to implement in a distributed manner; however, they have the shortcoming of not leveraging the instantaneous AoI information at the transmitter side. The work in \cite{jiang2018timely} designed a round-robin scheme for AoI minimization; however, such a scheme is incapable of dealing with the change of number of nodes in the network and thus may not be suitable for IIoT applications with nodes joining and leaving the system in a dynamic way (e.g., some nodes could switch to a sleep mode for saving energy). The follow-up work \cite{jiang2018decentralized} additionally assumed that nodes are provided with carrier sensing capabilities and proposed distributed schemes that have good performance in simulations; nevertheless, \cite{jiang2018decentralized} does not address how the parameters of the proposed algorithms should be designed.

Inspired by the aforementioned work, in this paper we aim to design and optimize an age-dependent stationary randomized policy for large-scale IoT networks which can be easily implemented in a decentralized manner. In our policy, the channel access probability (CAP) of each IoT node is predetermined as in \cite{S.Kaul-Distributed-Centralized-MAC,R.Talak-Distributed-MAC} but the CAP is age-dependent, in contrast to the age-independent counterpart schemes in \cite{S.Kaul-Distributed-Centralized-MAC,R.Talak-Distributed-MAC}. This paper makes two main contributions: (1) We devise a threshold-based distributed age-dependent random access (ADRA) protocol for massive IoT networks. Specifically, each device accesses the channel with a constant probability only when its instantaneous age exceeds a predetermined threshold; otherwise it will keep silent. (2) We develop an analytical framework for deriving a closed-form expression of the average AoI for each node in the network when the CAP and age threshold are given. Simulation results are provided to validate our analytical results and demonstrate the superiority of the proposed ADRA protocol over its conventional age-independent counterpart. {During the preparation of this paper, we noticed that a very recent work \cite{chen2019age} also proposed a similar random access policy. The active probability of each device in \cite{chen2019age} is based on conventional ALOHA backoff mechanisms, while we use a pre-determined CAP for each IoT device. Besides, the proposed random access policy in \cite{chen2019age} was optimized only for the case when the number of devices approaches infinity. At last, \cite{chen2019age} focused on the policy design and no closed-form performance analysis was provided.} Simulation results showed that our proposed scheme also outperforms the one in \cite{chen2019age}.


\section{System Model}
Consider an uplink IoT network consisting of an access point (AP) and $N$ IoT devices, denoted by $D_1,D_2,\cdots,D_N$, which aim to report their status as timely as possible to the AP via a common wireless channel.
The timeliness and freshness of the status updates from various IoT devices at the AP is quantified by the recently proposed AoI metric. As in \cite{S.Kaul-Distributed-Centralized-MAC}, time is divided into slots of equal durations and the transmission of each status update packet takes exactly one time slot. All IoT devices implement a slotted ALOHA-like random access protocol. Specifically, during each time slot, each IoT device can become either active or inactive according {to} a certain probability. If $D_i$ is active during one time slot, it first samples fresh information and generates a status update packet at the beginning of the time slot, which is known as the ``generate-at-will" model in the literature. $D_i$ then sends the generated status update to the AP. Otherwise, if $D_i$ chooses to be inactive, it stays idle during the said time slot. Moreover, it is assumed that collisions happen if more than one devices become active during the same time slot. We consider the interference-limited regime such that the transmission of status updates fails only when there is a collision. If no collision occurs, the status update of the IoT device is correctly decoded by the AP.

In the following two subsections, we first formally define the average AoI, and then describe the proposed age-dependent random access (ADRA) protocol.
\subsection{{ Average AoI}}\label{AoIdefinition}
\begin{figure}
\centering \scalebox{0.35}{\includegraphics{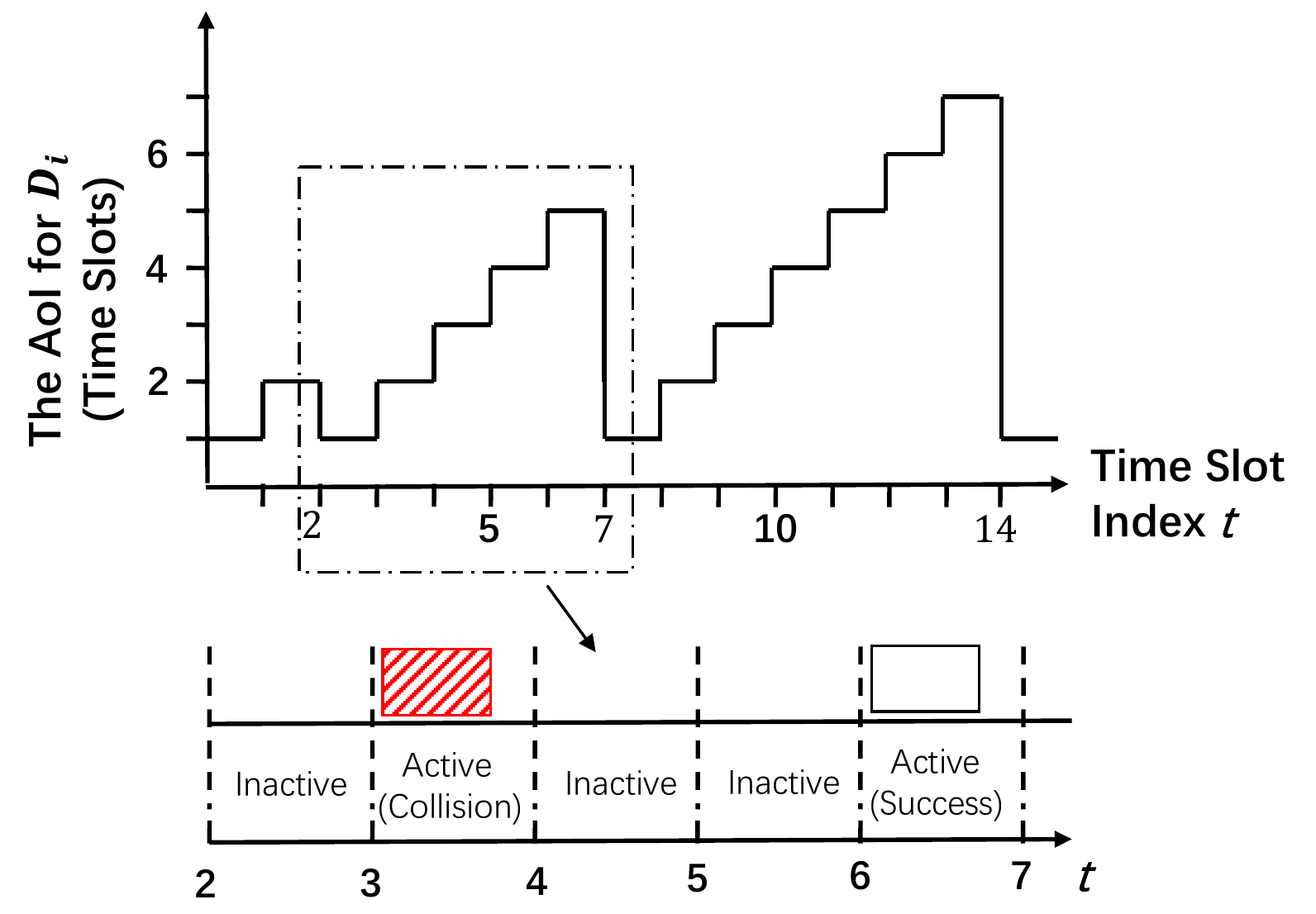}}
\caption{One possible evolution of the instantaneous AoI for the device $D_i$ versus time slot index. }\label{fig:AoI}
	\vspace{-1em}
\end{figure}
Denote by $t=1,2,3,\cdots$ the index of time slots and denote by $\Delta_i\left(t\right)$, $i\in\left\{1,2,\cdots,N\right\}$, the instantaneous AoI of the $i$-th IoT device in time slot $t$. We use $I_i\left(t\right)$ to denote the indicator of the active or inactive status for the device $D_i$ in time slot $t$. Particularly, $I_i\left(t\right)=1$ means that $D_i$ is active during time slot $t$, and $I_i\left(t\right)=0$ otherwise. Based on the definition of the AoI, the instantaneous AoI of $D_i$ drops to one when $D_i$ is active and all other devices keep inactive, i.e., $D_i$ successfully delivers a status update to the AP. Otherwise, the instantaneous AoI of $D_i$ increases by one for each time slot. Mathematically, the evolution of the instantaneous AoI for the device $D_i$ can be expressed~as
\begin{equation}
\Delta _i \left( {t  + 1} \right) =\left\{
\begin{matrix}
\begin{split}
   &{1}, \quad\text{if}\quad I_i \left( t  \right)=1, I_j \left( t  \right) = 0, \forall j\ne i\\
   &{\Delta _i \left( {t } \right)+1}, \quad\text{otherwise}  \\
\end{split}
\end{matrix}
\right..
\end{equation}
To ease understanding, we illustrate the evolution of the instantaneous AoI of $D_i$ for 14 consecutive time slots with a starting value of 1 in Fig. \ref{fig:AoI}. {Based on the AoI evolution, the average AoI for each IoT device is defined as}
\begin{equation}\label{AoIexpression}
\bar \Delta _{i}  = \mathop {\lim }\limits_{T \to \infty } {1 \over T}\sum\limits_{t  = 1}^T {\Delta _i \left( t  \right)}.
\end{equation}
\subsection{Age-Dependent Random Access}
{We now introduce the proposed ADRA protocol, which is an ALOHA-like stationary random access policy. It is worth pointing out here that \cite{S.Kaul-Distributed-Centralized-MAC, R.Talak-Distributed-MAC,chen2019age} also studied ALOHA-like stationary random access policies. However, they all considered age-independent random access (AIRA) protocols. More specifically, devices will access the channel with the same probability no matter whether their instantaneous AoI values are low or high. \emph{Our consideration of the age-dependent policy is motivated by the intuition that those devices with relatively smaller AoI should access the channel with a lower probability such that other devices with larger AoI can achieve a higher success probability to update their statuses by encountering less collisions}. In such a way, all devices co-exist in a more harmonic way to reduce the network-wide AoI together.}

{In our ADRA protocol, all IoT devices maintain a fixed age-dependent channel access probability (CAP) vector $\mathbf{p}=\left\{p_1,p_2,p_3,\cdots, \right\}$, where $p_l$ denotes the active probability when the instantaneous AoI is equal to $l$. As the first attempt to design and evaluate the ADRA policy, in this paper we consider the preliminary case that the elements in the {CAP vector} $\mathbf{p}$ can only equal to either $0$ or $p$. Specifically, if the instantaneous AoI is no less than a threshold $\delta$, the IoT device becomes activate with a fixed probability of $p$. Otherwise, the IoT device will stay inactive with probability 1. Hereafter, we refer to this simplified protocol as the \emph{threshold-based ADRA}. The general case of the CAP will be left as a future work for this paper.} Note that all devices will statistically have the same average AoI due the symmetric structure of the considered model, we thus can drop the subscript of device index $i$ in our subsequent analysis.

\section{Average AoI Analysis}
In this section, we analyze the average AoI of each IoT device for the proposed threshold-based ADRA policy. Here, we clarify that the approaches used in \cite{S.Kaul-Distributed-Centralized-MAC, R.Talak-Distributed-MAC,chen2019age} for analyzing AIRA policies are no longer suitable for our case. Specifically, in age-independent policies, the CAP for each IoT device transmits with a fixed value $p^\prime$ independent of its instantaneous AoI. That is, $p_l=p^\prime, \forall l$. Recall that the IoT device can successfully transmit a status update only when all the  other $N-1$ IoT devices are inactive at the same time. Let $q$ denote the successful status update probability when an IoT device becomes active. We thus have $q=\left(1-p^\prime\right)^{N-1}$. With reference to \cite[Eq. (9)]{S.Kaul-Distributed-Centralized-MAC}, the average AoI of each IoT device for AIRA policies can be readily given~by\footnote{ Note that the evolution of AoI follows a sawtooth in \cite{S.Kaul-Distributed-Centralized-MAC} because it grows linearly over time. By contrast, in our considered slotted system, the evolution of AoI updates at the end of each time slot and thus follows a staircase shape. The expression of average AoI is thus slightly different from \cite[Eq. (9)]{S.Kaul-Distributed-Centralized-MAC}.}
{\begin{equation}\label{conventionalAoI}
\bar \Delta^\prime  = {1 \over {p^\prime q }} =  {1 \over {p^\prime \left( {1 - p^\prime } \right)^{N - 1} }}.
\end{equation}}

In the existing AIRA policies, the successful status update probability $q$ of one device is independent of the instantaneous AoI of all other devices. In contract, in our proposed {threshold-based} ADRA protocol, the CAP vector of each IoT device (i.e., $\mathbf{p}$) depends on its instantaneous AoI. Therefore, $q$ depends on the instantaneous AoI of all IoT devices and thus the AoI evolutions of all devices tangle together, which makes the performance analysis of the {threshold-based} ADRA protocol non-trivial. Although the performance of the {threshold-based} ADRA scheme can be analyzed by applying a multi-dimension Markov Chain (MC), the computational complexity of this method grows exponentially as the number of devices $N$ increases, preventing us from further optimizing the proposed scheme.

To tackle the above issue, we adopt a widely-used approximation approach to decouple the tangled evolution of the AoI for all IoT devices. The key assumption that we apply is that the successful probability $q$ for all IoT devices is a constant when they decide to transmit a status update. Note that with this assumption, the value of $q$ is independent of the instantaneous AoI of all other IoT devices, but it is still a function of the age threshold $\delta$ and the CAP $p$. Such an approximation has been used in the literature to analyze the performance of various random access protocols using conventional metrics like throughput and delay, see e.g., \cite{Kwak2005ton,Dai2012twc} and references therein. Good accuracy has been demonstrated especially when the number of nodes $N$ is large. We later show that our simulation results presented in Sec. IV once again confirm the good accuracy of the adopted approximation.

With the approximation described in the previous paragraph, all devices follow an identical state transition process, which can be described by a Discrete-Time Markov Chain (DTMC) depicted in Fig. \ref{fig:markov1} characterized by the parameters $\delta$, $p$, and $q$. To analyze the average AoI of the proposed protocol, we now derive the stationary distribution of the DTMC.

\subsection{DTMC of Each IoT Device}
We consider a DTMC with infinite states and define each state $S_l$, $l \in \left\{1,2,\cdots\right\}$, as the instantaneous AoI being $l$. The transition probability $T_{m,n}$ is defined as the probability of the transition from state $S_m$ to $S_n$, $m,n \in\left\{1,2,\cdots\right\}$. In the proposed threshold-based ADRA, the AoI of each IoT device drops to one when it becomes active and all the other IoT devices stay inactive. Otherwise, the AoI increases by one. Recall that each IoT device can become active with a fixed probability $p$, when the instantaneous AoI is not less than the threshold $\delta$. Based on this fact, all the non-null transition probabilities of the DTMC can be summarized as
\begin{equation}
\left\{ \begin{matrix}
\begin{split}
   &{T_{l,l + 1}  = 1 }, \quad l \in \left\{ {1,2,\cdots,\delta - 1} \right\},\\
   &{T_{l,l+1}  = 1-pq}, \quad l \in \left\{ {\delta,\delta+1,\cdots} \right\},\\
   &{T_{l,1}  = p q},\quad l \in \left\{ {\delta,\delta+1,\cdots} \right\}.
\end{split}
\end{matrix}
\right..
\end{equation}
\begin{figure}
\centering \scalebox{0.35}{\includegraphics{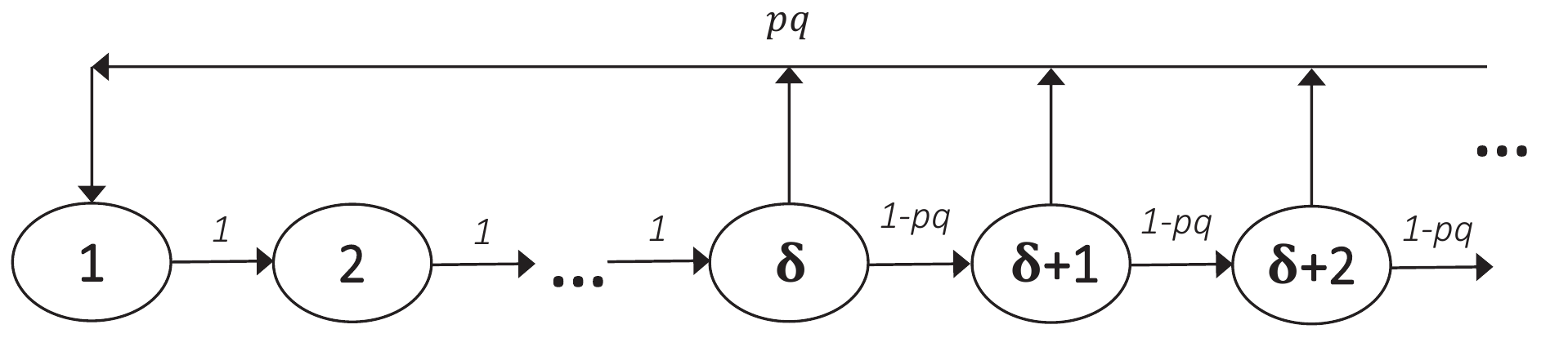}}
\caption{Discrete-Time Markov Chain model of each device in threshold-based ADRA. }\label{fig:markov1}
	\vspace{-1em}
\end{figure}
It is readily to verify that the considered DTMC is irreducible and thus it admits a stationary distribution. Denote by $\mathbf{\pi}=\left\{\pi_1,\pi_2,\cdots\right\}$ the stationary distribution of the AoI for the IoT device. Each element $\pi_l$, $l=1,2,\cdots$, denotes the stationary probability of the instantaneous AoI being $l$. With the derived transition probabilities, we can deduce that the stationary probabilities $\pi_1=\pi_2=\cdots=\pi_\delta$, and $\pi_l=\pi_\delta\left(1-pq\right)^{l-\delta}$, for $l > \delta$. Since $\sum\limits_{l = 1}^\infty  {\pi _l  = 1}$, the stationary distribution of the DTMC is given~by
\begin{equation}\label{TARApi}
\pi _l=\left\{ \begin{matrix}
\begin{split}
   &{{pq} \over {\delta pq + 1 - pq}}, \quad l \in \left\{ {1,2,\cdots,\delta } \right\}\\
   &{{{pq\left( {1 - pq} \right)^{l - \delta } } \over {\delta pq + 1 - pq}}, l \in \left\{ {\delta+1,\delta+2,\cdots} \right\}}\\
\end{split}
\end{matrix}
\right..
\end{equation}
We note that the only unknown parameter in (\ref{TARApi}) is $q$. In the following, we will derive the relationship between $q$ and the known parameters $\delta$ and $p$. Recall that $q$ is assumed to be independent of the instantaneous AoI. If we use $\eta$ to denote the stationary probability of an IoT device transmitting in a randomly chosen time slot, we then can express $q$ as
\begin{equation}\label{qequation}
q= (1- \eta)^{N-1},
\end{equation}
which follows due to the fact that the status update of one device can be successfully delivered only when all other devices are inactive in the said time slot. Note that here we have made a simplifying decoupling assumption the states of the IoT devices are independent of each other.

Moreover, based on the stationary probabilities given in (\ref{TARApi}), and the fact that each IoT device transmits with a fixed probability $p$ when the AoI is no smaller than $\delta$, we can express $\eta$ by the following equation:
\begin{equation}\label{activeprobability}
\begin{split}
\eta  &= \sum\limits_{l = \delta}^\infty  {\pi_l p }={{\pi _1 } \over q} = {p \over {\delta pq + 1 - pq}}.
\end{split}
\end{equation}

Jointly considering (\ref{qequation}) and (\ref{activeprobability}), we now have the following equation for the successful probability $q$:
\begin{equation}\label{solveq}
{1 \over f\left(q\right)}+q^{1\over{N-1}}-1=0,
\end{equation}
where
\begin{equation}
f\left(q\right) = \delta q + {1 \over p}-q.
\end{equation}
For the conventional AIRA policy with $\delta$ being $1$, it can be solved from (\ref{solveq}) that $q=\left(1-p\right)^{N-1}$. This observation coincides with the conventional analysis given above (\ref{conventionalAoI}). In the proposed threshold-based ADRA, we need to solve the successful probability $q$ from the equation given in (\ref{solveq}).

It is known that for the case of $\delta=1$, the optimal $p$ to minimize average AoI is $1/N$ \cite{S.Kaul-Distributed-Centralized-MAC}. For the general case in which $\delta$ is not limited to 1, the optimal $p$ should be more than $1/N$. The intuition is as follows. When $\delta > 1$, some of the devices will be in state $i < \delta$, in which case they will not transmit. Thus, effectively, in any given time slot, the number of devices who will transmit with probability $p$ is likely to be less than $N$. The effective number of devices that compete is smaller than $N$. To take that into account, in our subsequent analysis and simulation work, we assume $p \le 2/N$ so as to explore for $ p > 1/N$. We will show that with $\delta > 1$, we can find $p>1/N$ such that the average AoI is smaller than the average AoI of the case where $\delta = 1, p=1/N$.

To obtain the value of $q$ when $p \le 2/N$, we now define $g\left(q\right)={1 \over f\left(q\right)}+q^{1\over{N-1}}-1$, and present the following three lemmas. Specifically, Lemma 1 characterizes the range of $q$ for a feasible solution. Lemma 2 describes the monotonicity of the function $g\left(q\right)$. Finally, Lemma 3 indicates the uniqueness of the solution to (\ref{solveq}).
\begin{lemma}\label{lemma1}
The successful probability $q$ satisfies the inequality $\left( {{{N - 2} \over N}} \right)^{N - 1}\le q \le 1$, for $0<p \le {2\over N}$.
\end{lemma}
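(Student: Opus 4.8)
The plan is to work directly from the two defining fixed-point relations (\ref{qequation}) and (\ref{activeprobability}), rather than from the consolidated equation (\ref{solveq}), since the coupled pair keeps the probabilistic meaning of each quantity transparent and the bounds then fall out of elementary estimates on $\eta$. The guiding observation is that $q=(1-\eta)^{N-1}$ is monotone decreasing in $\eta$, so both bounds on $q$ follow as soon as I pin down the range of the transmission probability $\eta$, namely $0 \le \eta \le 2/N$.

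The key step is to simplify (\ref{activeprobability}). Noting that the denominator factors as $\delta pq + 1 - pq = p\bigl[(\delta-1)q + 1/p\bigr] = p\,f(q)$, the active probability collapses to $\eta = 1/f(q) = \frac{p}{(\delta-1)pq+1}$. Because $\delta \ge 1$, $p>0$, and $q \ge 0$, the denominator $(\delta-1)pq + 1$ is at least $1$, which yields the upper estimate $\eta \le p \le 2/N$; the lower estimate $\eta \ge 0$ is immediate since $f(q) = (\delta-1)q + 1/p > 0$.

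I then substitute these two extremes back into $q=(1-\eta)^{N-1}$: the bound $\eta \ge 0$ gives $q \le 1$, while $\eta \le 2/N$ gives $1-\eta \ge (N-2)/N$ and hence $q \ge \bigl((N-2)/N\bigr)^{N-1}$, which is exactly the claim. Everything here is a one-line monotonicity estimate, so I expect no heavy computation.

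The one subtle point — and the step I would watch most carefully — is the logical status of $q$. I am establishing a \emph{necessary} condition satisfied by any fixed-point solution $q$ of (\ref{solveq}), so I must be entitled to use (\ref{qequation}) and (\ref{activeprobability}) simultaneously, and in particular to assert $q \ge 0$ when bounding the denominator. This is not circular: for any admissible $q$ the associated $\eta = 1/f(q)$ lies in $(0,1)$, so $q=(1-\eta)^{N-1}$ is automatically nonnegative, and the denominator estimate is therefore self-consistent. Confirming this consistency is the only place the argument requires care.
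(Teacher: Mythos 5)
Your proof is correct and follows essentially the same route as the paper: bound the stationary transmission probability by $0 \le \eta \le p \le 2/N$ and push these extremes through the monotone map $q=(1-\eta)^{N-1}$. The only difference is that you explicitly justify $\eta \le p$ via the factorization $\delta pq + 1 - pq = p\,f(q)$ with $f(q) \ge 1/p$, a step the paper merely asserts, so your version is if anything slightly more complete.
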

\begin{proof}
When the channel access probability of each IoT device $p \le {2\over N}$, we can deduce that the active probability of each IoT device in a randomly chosen time slot $\eta \le {2 \over N}$. Based on (\ref{qequation}), we have the lower bound $q \ge \left( {{{N - 2} \over N}} \right)^{N - 1}$. On the other hand, the active probability in a randomly chosen time slot $\eta \ge 0$ and the successful probability $q \le 1$. Note that the left-hand side and right-hand side equalities for the bounds only hold when $p = {2\over N}$, $\delta =1$, and $p = 0$, respectively. This completes the proof.
\end{proof}
\begin{lemma}\label{lemma2}
The function $g\left(q\right)$ is a monotonically increasing function for $\left( {{{N - 2} \over N}} \right)^{N - 1}\le q \le 1$, $N \ge 3$ and $p \le {2\over N}$.
\end{lemma}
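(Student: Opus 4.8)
The plan is to prove monotonicity by showing that the derivative $g'(q)$ is nonnegative throughout the interval $\left(\frac{N-2}{N}\right)^{N-1}\le q\le 1$. First I would observe that $f(q)=\delta q+\frac1p-q=(\delta-1)q+\frac1p$ is affine in $q$ with slope $f'(q)=\delta-1$, so differentiating term by term gives
\begin{equation}\label{gprime}
g'(q)=\frac{1}{N-1}\,q^{\frac{2-N}{N-1}}-\frac{\delta-1}{f(q)^2}.
\end{equation}
The first term is strictly positive, while the second is nonpositive whenever $\delta\ge 1$. For the trivial case $\delta=1$ the second term vanishes and $g'(q)>0$ immediately, so the real content is the case $\delta>1$, where I must show the positive term dominates the negative one uniformly.

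The main obstacle is that the subtracted term in \eqref{gprime} depends on $\delta$, which is an arbitrary integer threshold and hence unbounded; a crude bound such as $f(q)^2\ge(\delta-1)^2q^2$ loses the stabilizing $\frac1p$ contribution and fails for small $\delta$. The key idea I would use to remove the $\delta$-dependence is an AM–GM estimate on $f(q)^2$: writing $a=\delta-1\ge0$ and keeping the $\frac1p$ term intact,
\begin{equation}\label{amgm}
f(q)^2=\left(aq+\tfrac1p\right)^2\ge 4\,aq\cdot\tfrac1p=\frac{4(\delta-1)q}{p},
\end{equation}
which yields the $\delta$-free bound $\dfrac{\delta-1}{f(q)^2}\le\dfrac{p}{4q}$. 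Invoking the hypothesis $p\le\frac2N$ then gives $\dfrac{\delta-1}{f(q)^2}\le\dfrac{1}{2Nq}$, so from \eqref{gprime} it suffices to establish
\begin{equation}\label{reduce}
\frac{1}{N-1}\,q^{\frac{2-N}{N-1}}\ge\frac{1}{2Nq},
\end{equation}
which, after multiplying through by $2N(N-1)q$ and using $1+\frac{2-N}{N-1}=\frac{1}{N-1}$, is equivalent to $\frac{2N}{N-1}\,q^{\frac{1}{N-1}}\ge1$.

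Finally I would close the argument by supplying the lower bound on $q$ already provided by Lemma~\ref{lemma1}. Since $q\ge\left(\frac{N-2}{N}\right)^{N-1}$, we have $q^{\frac{1}{N-1}}\ge\frac{N-2}{N}$, whence $\frac{2N}{N-1}\,q^{\frac{1}{N-1}}\ge\frac{2(N-2)}{N-1}$. The remaining inequality $\frac{2(N-2)}{N-1}\ge1$ reduces to $2(N-2)\ge N-1$, i.e. $N\ge3$, which is exactly the stated hypothesis. Chaining these estimates back through \eqref{reduce}, \eqref{amgm}, and \eqref{gprime} shows $g'(q)\ge0$ on the whole interval, establishing monotonicity. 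I expect the AM–GM step \eqref{amgm} to be the crux, as it is what converts an unbounded-$\delta$ problem into the clean, purely $N$-dependent threshold $N\ge3$; the rest is bookkeeping with the exponents and an appeal to Lemma~\ref{lemma1}.
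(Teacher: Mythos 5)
Your proof is correct, and it follows the same overall strategy as the paper --- differentiate $g$, observe $f'(q)=\delta-1$, bound the negative term $\frac{\delta-1}{f(q)^2}$ using $p\le \frac{2}{N}$, and finish with the lower bound on $q$ from Lemma~\ref{lemma1} together with $N\ge 3$ --- but the middle estimate is genuinely different and, in my view, cleaner. The paper first upper-bounds the numerator via $\delta-1\le \frac{f(q)}{q}-\frac{N}{2q}$ (which is exactly the hypothesis $\frac{1}{p}\ge\frac{N}{2}$), substitutes this into the derivative, and then completes a square to arrive at the sufficient condition
$\bigl[f(q)-\tfrac{N-1}{2q^{1/(N-1)}}\bigr]^2-\bigl(\tfrac{N-1}{2q^{1/(N-1)}}\bigr)^2+\tfrac{(N-1)N}{2q^{1/(N-1)}}>0$,
which it secures by dropping the square. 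Your AM--GM step $f(q)^2=\bigl((\delta-1)q+\tfrac1p\bigr)^2\ge \tfrac{4(\delta-1)q}{p}$ eliminates $\delta$ in one line and avoids the completing-the-square detour entirely; notably, both routes funnel into the identical terminal condition $q^{1/(N-1)}\ge\frac{N-1}{2N}$, which then follows from $q^{1/(N-1)}\ge\frac{N-2}{N}$ and $2(N-2)\ge N-1$ for $N\ge3$. The only (shared) cosmetic caveat is that at the extreme corner ($N=3$, $p=\frac2N$, $q=(\frac{N-2}{N})^{N-1}$, and AM--GM equality) your chain gives $g'(q)\ge0$ rather than strict positivity, exactly as the paper's boundary cases do; this does not affect the monotonicity claim as stated.
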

\begin{proof}
We now prove that function $g\left(q\right)$ is a monotonically increasing function of $q$ for $\left( {{{N - 2} \over N}} \right)^{N - 1}\le q \le 1$. The first-order derivative of $g\left(q\right)$ with respect to $q$ can be calculated~as
\begin{equation}\label{APP1_ineuqality1}
{{dg\left( q \right)} \over {dq}} =  - {{df\left( q \right)/dq} \over {\left[ {f\left( q \right)} \right]^2 }} + {1 \over {\left( {N - 1} \right)q^{{{N - 2} \over {N - 1}}} }},
\end{equation}
where ${{df\left( q \right)} \over {dq}} = \delta-1$. It is not straightforward to prove ${{dg\left( q \right)} \over {dq}}>0$ directly due to the complicated structure of ${{dg\left( q \right)} \over {dq}}$ given in (\ref{APP1_ineuqality1}). Therefore, we now derive an upper bound of ${{df\left( q \right)} \over {dq}}$ to further simplify (\ref{APP1_ineuqality1}), and then prove ${{dg\left( q \right)} \over {dq}}>0$.

With $p \le {2\over N}$, we have
\begin{equation}\label{upperbounddfdq}
{{df\left( q \right)} \over {dq}}\le {{f\left( q \right)} \over q} - {N \over {2q}}.
\end{equation}
Based on the results derived in (\ref{APP1_ineuqality1}) and (\ref{upperbounddfdq}), ${{dg\left( q \right)} \over {dq}}>0$ can be further simplified to
\begin{equation}\label{APP1_inequality3}
{\left[ {f\left( q \right) - {{N - 1} \over {2{q^{{1 \over {N - 1}}}}}}} \right]^2} - {\left( {{{N - 1} \over {2{q^{{1 \over {N - 1}}}}}}} \right)^2} + {{\left(N - 1\right)N} \over {2{q^{{1 \over {N - 1}}}}}} >0.
\end{equation}
A sufficient condition for inequality (\ref{APP1_inequality3}) is  ${{\left(N - 1\right)N} \over {2{q^{{1 \over {N - 1}}}}}} >{\left( {{{N - 1} \over {2{q^{{1 \over {N - 1}}}}}}} \right)^2} $, leading to the inequality
\begin{equation}
q^{{1 \over {N - 1}}}  \ge {{(N - 1)} / {2/N}}.
\end{equation}
Recall that $\left( {{{N - 2} \over N}} \right)^{N - 1} \le q \le 1$, and it can be readily verified that the above inequality holds for $N \ge 3$. This completes the proof.
\end{proof}
\begin{lemma}\label{lemma3}
The equation $g\left(q\right)=0$ admits a unique solution of $q$ for $\left( {{{N - 2} \over N}} \right)^{N - 1}\le q \le 1$, $N \ge 3$ and $p \le {2\over N}$.
\end{lemma}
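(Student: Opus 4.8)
The plan is to separate existence from uniqueness and then combine them through the Intermediate Value Theorem. Uniqueness comes essentially for free from Lemma~\ref{lemma2}: because $g\left(q\right)$ is strictly increasing on $\left(\tfrac{N-2}{N}\right)^{N-1}\le q\le 1$, it can cross the value $0$ at most once, so \eqref{solveq} has at most one root in this range. The remaining task is to establish existence, and since $g\left(q\right)$ is continuous on this closed interval, I would do so by checking that $g$ is nonpositive at the lower endpoint and strictly positive at the upper endpoint, whence a sign change forces a root.

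The upper endpoint would be handled by direct substitution. Setting $q=1$ and recalling $f\left(q\right)=\left(\delta-1\right)q+\tfrac{1}{p}$ gives $g\left(1\right)=1/\left(\delta-1+\tfrac{1}{p}\right)>0$, the denominator being strictly positive because $\delta\ge 1$ and $p>0$.

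The crux --- and the only step that needs genuine care --- is the sign at the lower endpoint $q_0=\left(\tfrac{N-2}{N}\right)^{N-1}$. Here $q_0^{1/\left(N-1\right)}=\tfrac{N-2}{N}$, so that $g\left(q_0\right)=\tfrac{1}{f\left(q_0\right)}-\tfrac{2}{N}$. The idea is to exploit the standing assumption $p\le\tfrac{2}{N}$, which forces $\tfrac{1}{p}\ge\tfrac{N}{2}$; since $\left(\delta-1\right)q_0\ge 0$ for $\delta\ge 1$, this yields $f\left(q_0\right)\ge\tfrac{N}{2}$ and hence $g\left(q_0\right)\le 0$. With the two endpoint signs in hand, the continuity and strict monotonicity of $g$ let the Intermediate Value Theorem deliver exactly one root in $\left(\tfrac{N-2}{N}\right)^{N-1}\le q\le 1$, completing the argument; the boundary case $g\left(q_0\right)=0$ arises only when $\delta=1$ and $p=\tfrac{2}{N}$, placing the unique root at the lower endpoint, and otherwise it lies strictly inside. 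I expect no serious obstacle beyond this endpoint estimate, as the heavy lifting was already carried out in the monotonicity proof of Lemma~\ref{lemma2}.
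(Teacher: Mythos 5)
Your proposal is correct and follows essentially the same route as the paper: uniqueness from the monotonicity in Lemma~\ref{lemma2}, and existence from the endpoint signs $g\bigl(\bigl(\tfrac{N-2}{N}\bigr)^{N-1}\bigr)\le 0$ (via $f(q)\ge \tfrac{1}{p}\ge \tfrac{N}{2}$) and $g(1)>0$ together with the Intermediate Value Theorem. Your write-up is in fact slightly cleaner than the paper's, which contains two typos in this step (writing $f(q)\ge \tfrac{2}{N}$ where $\tfrac{N}{2}$ is meant, and $\tfrac{N}{2}$ where $\tfrac{2}{N}$ is meant in the bound on $g$ at the lower endpoint); your version states the bounds correctly.
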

\begin{proof}
First of all, because $\delta \ge 1$ and $ p \le {2 \over N}$, we have
\begin{equation}
f\left( q \right) \ge \left(\delta -1 \right)q + {2 \over N} = {2 \over N}.
\end{equation}
Therefore, we can attain
\begin{equation}\label{gleft}
g\left(\left( {{{N - 2} \over N}} \right)^{N - 1}\right)\le {N \over 2}+{{{N - 2} \over N}}-1=0.
\end{equation}
It can also be verified that $g\left(1\right)={1 \over {f\left( 1 \right)}}\ge 0$. Note that the equality in (\ref{gleft}) holds only when $p={2 \over N}$ and $\delta = 1$, and the equality for $g\left(1\right)={1 \over {f\left( 1 \right)}}= 0$ holds only when $p=0$. In these cases, the solution of $q$ is given by $q= \left( {{{N - 2} \over N}} \right)^{N - 1}$, and $q=1$, respectively. 

For other general cases, we have $g\left(\left( {{{N - 2} \over N}} \right)^{N - 1}\right)<0$ and $g\left(1\right)>0$. With the monotonicity proved in Lemma 2, we can deduce that there must exist a unique solution $\left({{{N - 2} \over N}} \right)^{N - 1}<q^*<1$ such that $g\left(q^*\right)=0$. This completes the proof.
\end{proof}
Based on the above Lemmas 1-3, we can obtain the successful probability $q$ by solving (\ref{solveq}) using numerical methods like the bisection method. With the solution of $q$, we can then obtain the stationary distribution of the AoI given in (\ref{TARApi}) for the proposed threshold-based ADRA protocol.
\begin{figure*}
\centering
 \subfigure[When $p = 2/N$, $\delta = N$]
  {\scalebox{0.45}{\includegraphics {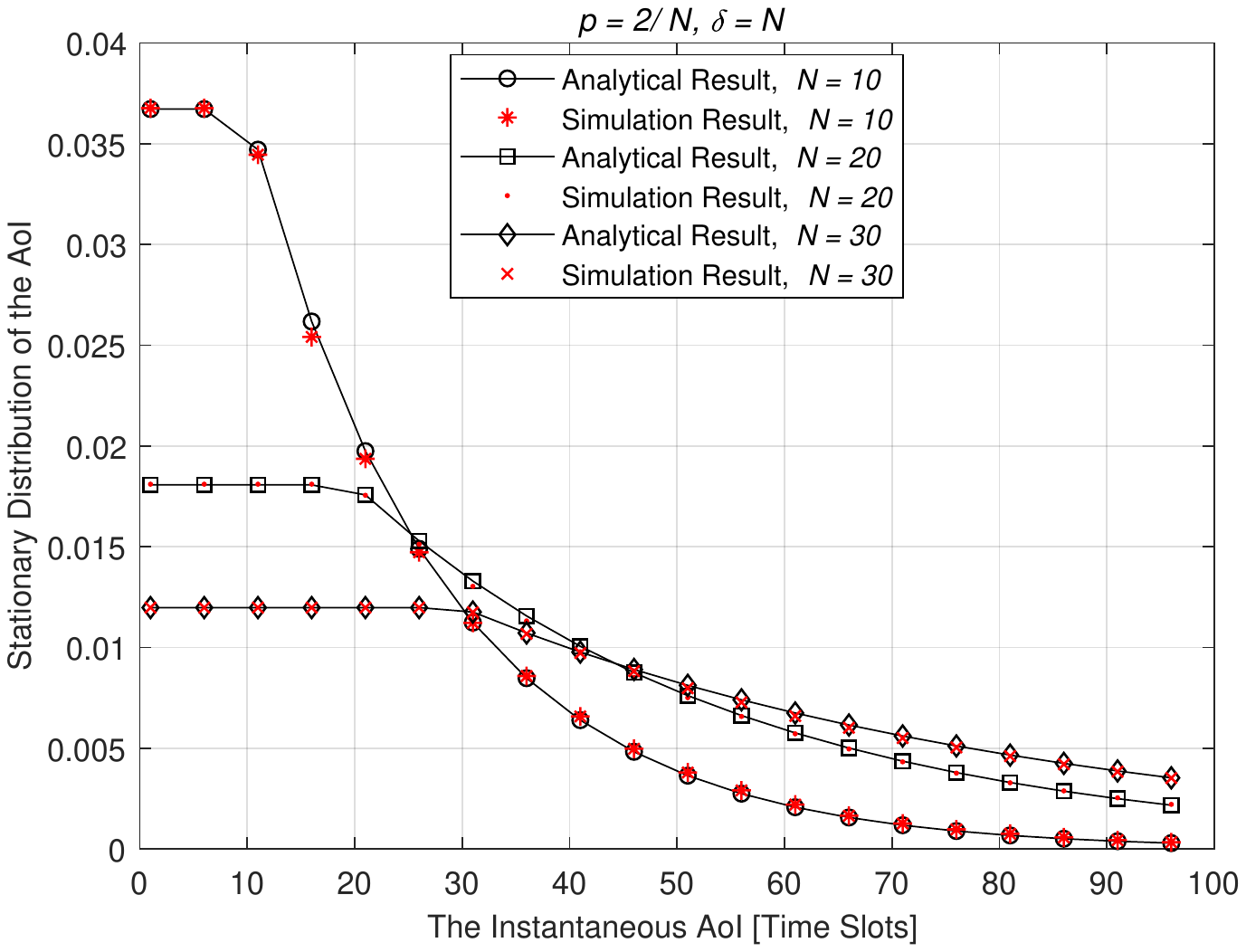}
  \label{fig:distribution_a}}}
\hfil
 \subfigure[When $p = 1.5/N$, $\delta = N/2$]
  {\scalebox{0.45}{\includegraphics {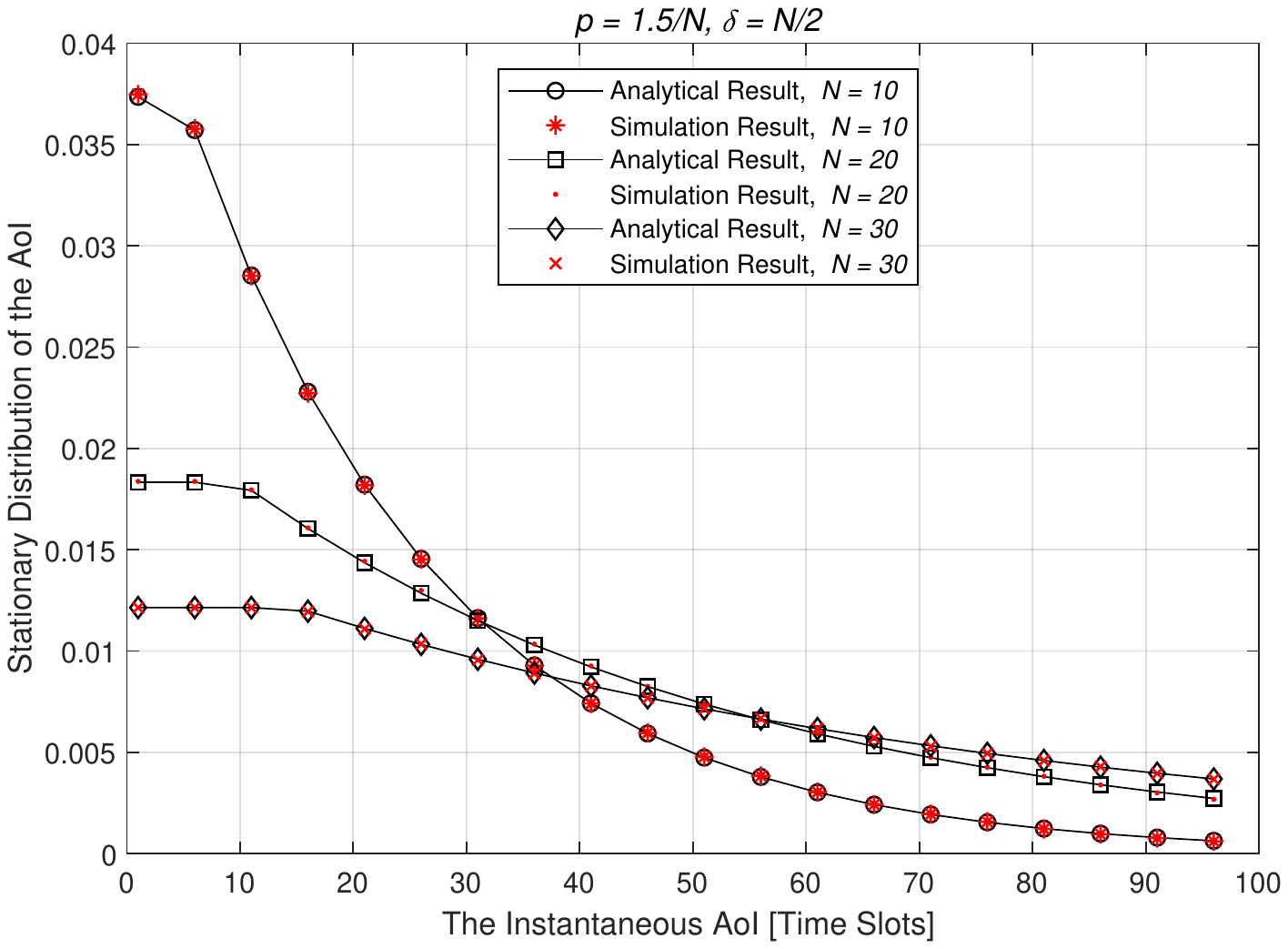}
\label{fig:distribution_b}}}
\caption{Stationary distribution of the instantaneous AoI for different $N$, $p$ and $\delta$.}
\label{fig:distribution}
	\vspace{-1em}
\end{figure*}

\subsection{Analysis of the Average AoI}
The average AoI for each IoT device applying threshold-based ADRA policy can be derived as
\begin{equation}\label{averageAoI}
\bar \Delta=\sum\limits_{l = 1}^\infty  {l\pi _l }={\delta  \over 2} + {1 \over {pq}} - {\delta  \over {2\left( {\delta pq + 1 - pq} \right)}}.
\end{equation}
\begin{remark}\label{remark2}
We can clearly see from (\ref{averageAoI}) that the average AoI of the proposed threshold-based ADRA protocol can be reduced by increasing the active probability $p$ or decreasing the age threshold $\delta$. However, increasing $p$ and decreasing $\delta$ will also lead to frequent collisions which in turn reduce the successful probability $q$, thereby resulting in an increase of the average AoI. Therefore, we can deduce that there exists optimal values of $p$ and $\delta$, which minimize the average AoI evaluated in (\ref{averageAoI}). Due to the complicated structure of the derived analytical expression, it is hard for us to characterize closed-form solutions to the optimal $p$ and $\delta$. Fortunately, the optimal values of $p$ and $\delta$ can easily be obtained via a two-dimension search based on the derived analytical expression.

Moreover, for the special case with $\delta = 1$ (i.e., conventional AIRA policy), we can obtain that the average AoI can be simplified as
\begin{equation}
\bar \Delta'={1 \over {p\left( {1 - p} \right)^{N - 1} }},
\end{equation}
which coincides with the expression given in (\ref{conventionalAoI}).
\end{remark}

\section{Simulation Results and Discussions}

{In this section, we first present simulation results to validate the theoretical analysis conducted in Section III. We then compare the optimal performance of the proposed threshold-based ADRA with that of the existing AIRA in terms of the average AoI.

In Fig. \ref{fig:distribution}, we illustrate the stationary distribution of the instantaneous AoI for different system setups. We can first observe from Fig. \ref{fig:distribution} that the simulation results are close to their analytical counterparts for all the simulated cases. More importantly, the simulation curves and analytical curves coincide each other well when $N$ is relatively large and $p$ is relatively small. This is understandable since we adopt the approximation of a fixed successful probability for all IoT devices in this paper and such approximation has a good accuracy when the number of IoT devices is large and the channel access probability of each device is relatively small. Because the analytical results agree well with the simulation results, we will only plot the analytical curves in the remaining two figures in this section.

We next plot the average AoI of the proposed threshold-based ADRA versus the threshold $\delta$ for different system setups in Fig. \ref{fig:delta}. We can observe from the figure that there exists an optimal value of $\delta$ which minimizes the average AoI of the proposed scheme in all the simulated cases, which validates the deduction in Remark 1. Recall that increasing $\delta$ will, on the one hand, decrease the active probability of each IoT device, but on the other hand, reduce the collision probability of the network. Thus, the optimal value of the average AoI for the proposed threshold-based ADRA can be tuned by finding the optimal $\delta$. Furthermore, the optimal $\delta$ decreases as $N,p$ reduce. The rationale is that each IoT device should transmit more aggressively, i.e., using a lower $\delta$, when the number of nodes are small and the active probability is low. Similar results can also be found for the active probability $p$.

Lastly, we compare the performance of the proposed threshold-based ADRA with the existing AIRA in terms of the average AoI in Fig. \ref{fig:comparison}. The optimal average AoI of the ADRA scheme is obtained by finding the optimal values of $\delta$ and $p$ through an exhaustive search. The optimal active probability for the existing AIRA is set to $1 \over N$ according to \cite{S.Kaul-Distributed-Centralized-MAC}. From Fig. \ref{fig:comparison}, we can clearly see that our proposed threshold-based ADRA can boost the average AoI performance significantly compared with the AIRA. This is because all the IoT devices adopting threshold-based ADRA co-exist in a more harmonious way by providing the nodes with higher AoI more opportunities to transmit without collisions.} More intuitively, we divide the devices into two groups: those with instantiates age smaller than $\delta$ and those larger than $\delta$. There is no contention among devices with states from $1$ to $\delta - 1$. If we can put many devices in this group, with very few devices left in the other group, then the contention among the devices in the second group will be small. {On the other hand, our proposed algorithm also outperforms Algorithm 2 designed in \cite{chen2019age} as the latter focused on the scenario with the number of devices approaching infinity.} Furthermore, we can observe from Fig. \ref{fig:comparison} that the performance gaps between the optimized ADRA protocol and the benchmarking schemes are enlarged as the number of end-devices increases. Note that although we limit our search space to $p<2/N$, the result is already better than the case of $p=1/N,\;\delta=1$. An outstanding question is whether we could find even better solutions if we extend our search space to beyond $p<2/N$.

\begin{figure}
\centering \scalebox{0.45}{\includegraphics{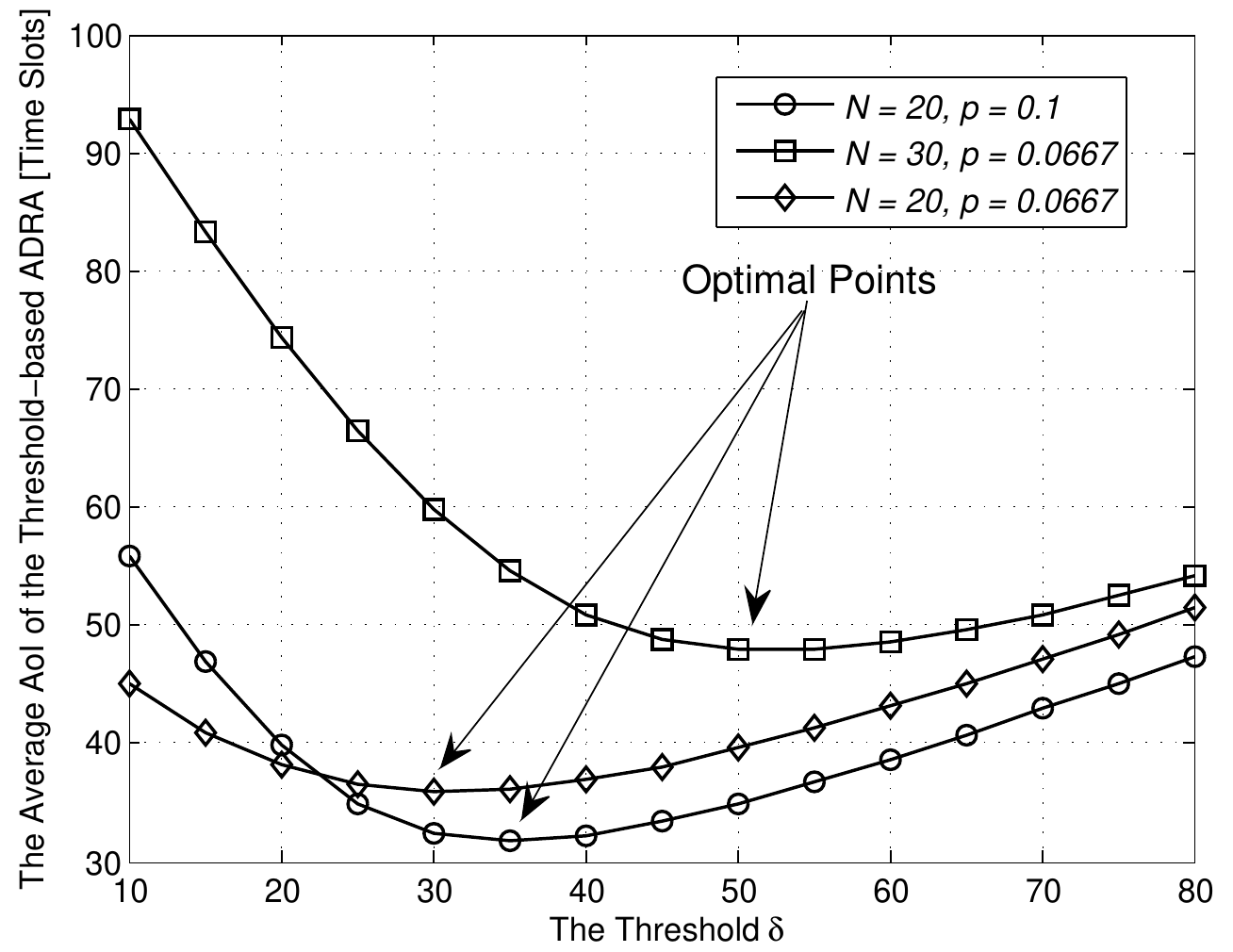}}
\caption{The average AoI of the threshold-based ADRA versus threshold $\delta$ for different values of $N$ and $p$. }\label{fig:delta}
	\vspace{-1em}
\end{figure}

\begin{figure}
\centering \scalebox{0.45}{\includegraphics{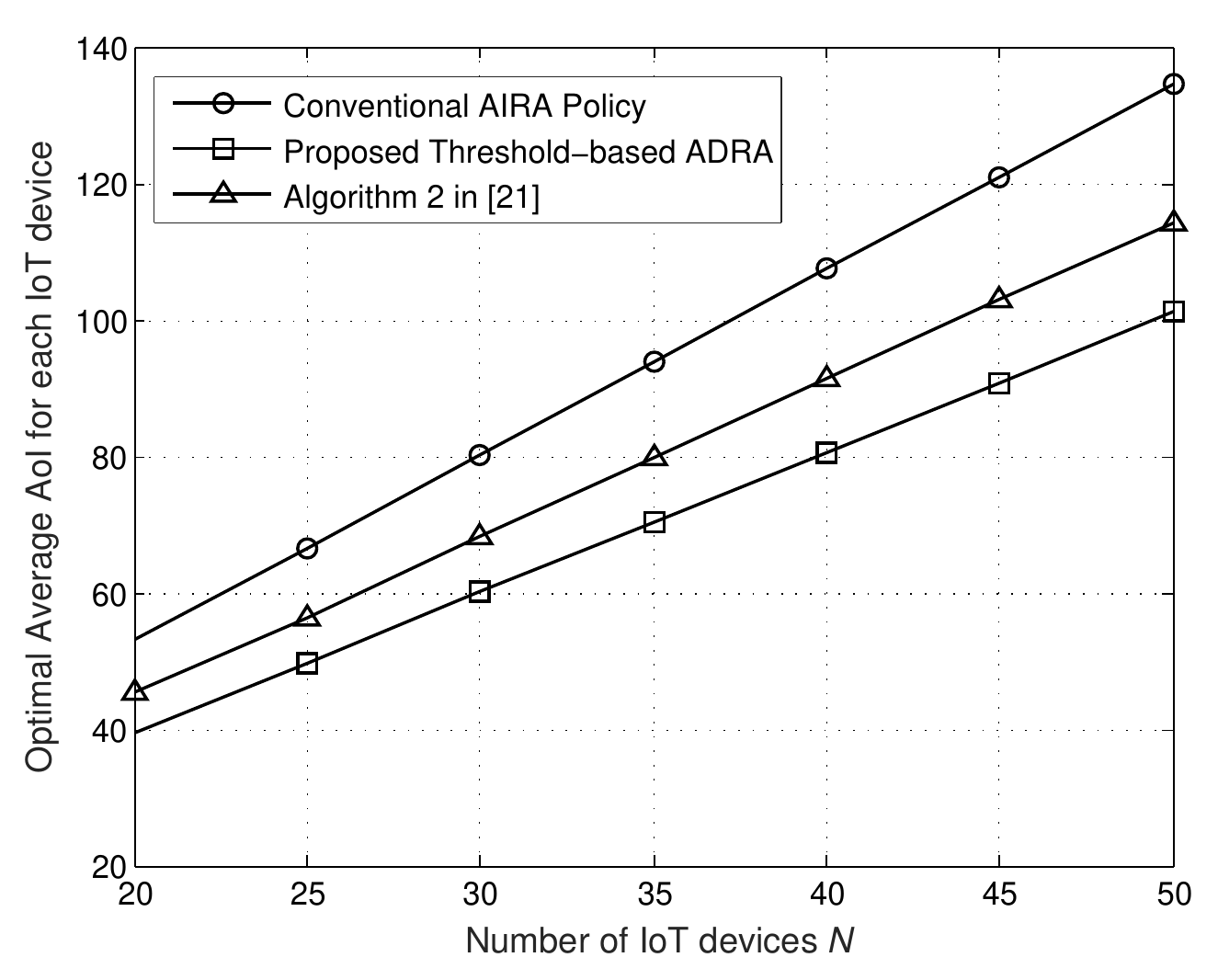}}
\caption{The average AoI versus the number of IoT devices $N$ for the proposed threshold-based ADRA, the existing AIRA schemes and Algorithm 2 in \cite{chen2019age}. }\label{fig:comparison}
	\vspace{-1em}
\end{figure}


%
{
\section{Conclusions}
In this paper, we proposed a threshold-based age-dependent random access (ADRA) scheme for massive IoT networks. In ADRA, each IoT device can only become active and transmit its latest status when its instantaneous age is no less than a predefined threshold. To analyze the average performance of the ADRA, we adopted an approximation to decouple the tangled AoI evolutions of all IoT devices. Specifically, we approximately modelled the AoI evolution for each IoT device by a Discrete-Time Markov chain with a fixed successful probability that is irrelevant to the instantaneous AoI of all IoT devices. Simulation results verified the tightness of our approximation and the correctness of our theoretical analysis, and showed that the proposed threshold-based ADRA scheme outperform the state-of-the-art age-oriented random access scheme in all simulated cases.}

\ifCLASSOPTIONcaptionsoff
  \newpage
\fi

\bibliographystyle{IEEEtran}
\bibliography{References}

\end{document}